\newcommand {\dl}   {\delta}       
\newcommand {\ve}   {\varepsilon}  
\newcommand {\lm}   {\lambda}
\newcommand {\vf }  {\varphi}
\newcommand {\pl}   {\partial}     \newcommand {\nb}  {\nabla}
\renewcommand {\ln}{{\sf\,ln}}
       \renewcommand {\lim}{{\sf\,lim\,}}
\newcommand   {\ex}{{\sf\,e}}
\newcommand   {\const}{{\sf\,const}}     \newcommand   {\diag}{{\sf\,diag\,}}
\newcommand {\MO}  {{\mathbb O}}   
   \newcommand {\MR}  {{\mathbb R}}
\newcommand {\MS}  {{\mathbb S}}   
\newcommand {\MU}  {{\mathbb U}}
\newcommand {\Go}  {\mathfrak{o}}   
\newcommand {\Gs}  {\mathfrak{s}}   
\newcommand {\Gu}  {\mathfrak{u}}
\newtheorem{prop}{Proposition}[section]
\newtheorem{theorem}{Theorem}[section]
\theoremstyle{definition}
\begin{document}
\title     {Spherically symmetric 't Hooft--Polyakov monopoles}
\author    {M.~O.~Katanaev\thanks{E-mail: katanaev@mi-ras.ru}\\
              \sl Steklov Mathematical Institute,\\
            \sl ul. Gubkina, 8, Moscow, 119991, Russia}
\maketitle
\begin{abstract}
A general analytic spherically symmetric solution of the Bogomol'nyi equations
is found. It depends on two constants and one arbitrary function on radius and
contains the Bogomol'nyi--Prasad--Sommerfield and Singleton solutions as
particular cases. Thus all spherically symmetric 't Hooft-Polyakov monopoles
with massless scalar field and minimal energy are derived.
\end{abstract}
\section{Introduction}
The 't Hooft--Polyakov monopole solutions are exact static spherically symmetric
solutions with finite energy of the field equations of the $\MS\MU(2)$ gauge
model with the triplet of scalar fields $\vf$ in the adjoint representation and
$\lm\vf^4$ type interaction \cite{tHooft74,Polyak74}. There are many other
related solutions of the equations of motion without spherical symmetry and
different boundary conditions. All solutions are divided into homotopically
inequivalent classes parameterized by the degree of the map $\MS^2\to\MS^2$
(topological charge $Q$ taking integer values) defined by the boundary
conditions (see, e.g.\ \cite{Monast93,Rubako02,Shnir05,Katana13B}). Solutions
with spherically symmetric boundary conditions belong to the class with $Q=1$.

Monopole-type solutions have many point particle properties: finite energy,
stability and localisation in space, and are interesting both from mathematical
and physical point of view. So far they are not observed in nature. Recently,
the 't Hooft--Polyakov monopoles were given new physical interpretation in solid
state physics \cite{Katana20A,Katana21D,Katana21E} describing elastic media with
continuous distribution of disclinations and dislocations.

In each $Q$-sector of the monopole-type solutions, there are field
configurations with minimal energy. They are defined by the Bogomol'nyi
equations \cite{Bogomo76}. The solutions with minimal energy satisfy also the
original equations of motion of the model for massless scalar fields without
self interaction. Therefore solutions of the Bogomol'nyi equations are important
and interesting.

Bogomol'nyi equations reduce to the system of nonlinear ordinary differential
equations in the spherically symmetric case. The author was aware only of two
exact analytic solutions of this system of equations: the
Bogomol'nyi--Prasad--Sommerfield \cite{PraSom75,Bogomo76} and Singleton
\cite{Single95} solutions. In the present paper, we have found a general
analytic spherically symmetric solution of the Bogomol'nyi equations. It is
parameterized by one arbitrary function of radius and two constants. There is
also one degenerate solution parameterized by one arbitrary constant. In
particular cases, a general solution yields the Bogomol'nyi--Prasad--Sommerfield
and Singleton solutions. Thus we have found all spherically symmetric
't Hooft--Polyakov monopoles for massless scalar fields which minimize the
energy in the $Q=1$ sector.
\subsection{A general solution}                                   \label{smvbfy}
We consider the Euclidean space $\MR^3$ with Cartesian coordinates $x^\mu$ and
Euclidean metric $\dl_{\mu\nu}:=\diag(+++)$, $\mu,\nu=1,2,3$. Let there be the
$\MS\MU(2)$ local connection form $A_\mu{}^i(x)$ (the Yang--Mills fields) and
the triplet of scalar fields $\vf^i(x)$, $i=1,2,3$, in the adjoint
representation of $\MS\MU(2)$. The totally antisymmetric tensor is denoted by
$\ve_{ijk}$, $\ve_{123}=1$, and raising and lowering of Latin indices is
performed by the Euclidean metric $\dl_{ij}$ (the Killing--Cartan form of
$\MS\MU(2)$).

We are looking for spherically symmetric solutions of the Bogomol'nyi equations
\cite{Bogomo76}
\begin{equation}                                                  \label{unvhty}
  F_{\mu\nu}{}^i=\ve_{\mu\nu\rho}\nb^\rho\vf^i,
\end{equation}
where
\begin{equation*}
  F_{\mu\nu}{}^i:=\pl_\mu A_\nu{}^i-\pl_\nu A_\mu{}^i+A_\mu{}^j A_\nu{}^k
  \ve_{jk}{}^i
\end{equation*}
is the local curvature form (the Yang--Mills field strength) and
\begin{equation}                                                  \label{esffdd}
  \nb_\mu\vf^i:=\pl_\mu\vf^i+A_\mu{}^j\vf^k\ve_{jk}{}^i
\end{equation}
is the covariant derivative of scalar fields.

Any solution of the Bogomol'nyi equations satisfies also the field equations of
$\MS\MU(2)$ gauge model in Minkowskian space-time $\MR^{1,3}$ with massless
scalar fields without self interaction in the time gauge $A_0{}^i=0$. The
inverse statement is not true. Smooth solutions of the Bogomol'nyi equations
have minimal energy in each sector of topologically different (nonhomotopic)
solutions of the model (see, e.g.~\cite{Monast93,Rubako02,Shnir05,Katana13B}).

The boundary conditions are supposed to be spherically symmetric
\begin{equation}                                                  \label{uvcxfs}
  \underset{r\to\infty}\lim A_\mu{}^i\to0,\qquad
  \underset{r\to\infty}\lim \vf^i\to \frac{x^i}ra,\qquad a\ne0,
\end{equation}
where $r:=\sqrt{x^\mu x_\mu}$ is the usual radius in the spherical coordinate
system.

The Bogomol'nyi equations are the system of 9 first order nonlinear partial
differential equations for 12 unknown functions $A_\mu{}^i$ and $\vf^i$. They
are simpler then the original field equations of the $\MS\MU(2)$ gauge model.

Now we find a general static spherically symmetric solution of the Bogomol'nyi
equations. We assume that the global rotation group $\MS\MO(3)$ acts
simultaneously both on the base $\MR^3$, and on the Lie algebra $\Gs\Go(3)$,
which, as a vector space, is also a three-dimensional Euclidean space $\MR^3$.
It means that if $S\in\MS\MO(3)$ is an orthogonal matrix, then the
transformation has the form
\begin{equation*}
  A_\mu{}^i\mapsto S^{-1\nu}_{~~\mu} A_\nu{}^jS_j{}^i,\qquad
  S\in\MS\MO(3).
\end{equation*}
Under this assumption, the difference between Greek and Latin indices
disappears, but we shall, as far as possible, distinguish them for clarity.

The most general spherically symmetric components of the connection have the
form
\begin{equation}                                                  \label{unbcht}
  A_\mu{}^i(x):=\ve_\mu{}^{ij}\frac{x_j}rW(r)+\dl_\mu^iV(r)
  +\frac{x_\mu x^i}{r^2}U(r),
\end{equation}
where $W$, $V$, $U$ are arbitrary sufficiently smooth functions of radius.

If we include reflections into the rotation group, then $A_\mu{}^k$ are
components of the second rank pseudo-tensor with respect to the action of the
full rotation group $\MO(3)$, due to the presence of the third rank
pseudo-tensor $\ve_{ijk}$ in Eq.(\ref{esffdd}). Under the action of the full
rotation group $\MO(3)$ the function $W$ is a scalar, and $V$ and $U$ are
pseudoscalars.

The famous 't Hooft--Polyakov monopole solution \cite{tHooft74,Polyak74}
corresponds to ansatz (\ref{unbcht}) with $V\equiv U\equiv 0$.

A general spherically symmetric ansatz for the scalar fields is
\begin{equation*}
  \vf^i:=\frac{x^i}r F(r),
\end{equation*}
where $F$ is an arbitrary function.

To simplify equations, we introduce dimensionless functions $K(r)$, $L(r)$,
$M(r)$, and $H(r)$:
\begin{equation}                                                  \label{abdvvv}
  W:=\frac{K-1}{r},\qquad U:=\frac L{r},\qquad V:=\frac M{r},\qquad
  F:=\frac H{r}.
\end{equation}
Then the full system of Bogomol'nyi equations becomes
\begin{align}                                                     \label{akkgjy}
  rK'+M(L+M)=&KH,
\\                                                                \label{abcndh}
  -rK'+K^2-1-LM=&rH'-H-KH,
\\                                                                \label{absvdf}
  rM'-K(L+M)=&MH.
\end{align}

A general solution of this system of equations for $H\equiv 0$ was found in
\cite{KatVol20}, where it was given physical interpretation in solid state
physics as describing media with disclinations. Therefore we assume that $H\ne0$
in what follows.

Now we introduce new independent variable
\begin{equation}                                                  \label{ajhsjk}
  r\mapsto\xi:=\ln r,\qquad r>0,
\end{equation}
and the index will denote differentiation with respect to $\xi$, e.g.\
\begin{equation*}
  M_\xi:=\frac{dM}{d\xi}=rM',\qquad M_{\xi\xi}=r^2M''+rM'.
\end{equation*}

\begin{theorem}                                                   \label{tghqhh}
A general solution of the system of equations (\ref{akkgjy})--(\ref{absvdf}) is
\begin{align}                                                     \label{ubbxjl}
  M(\xi)=&\pm\sqrt{1-K^2+H_\xi-H},
\\                                                                \label{ujdfji}
  L(\xi)=&\mp\frac{K_\xi-K^2+1+H_\xi-(K+1)H}{\sqrt{1-K^2+H_\xi-H}},
\end{align}
where $H(\xi)$ is a solution of the Riccati equation
\begin{equation}                                                  \label{ahsgtr}
  H_\xi+H-H^2=C\ex^{2\xi}
\end{equation}
with arbitrary constant $C\in\MR$ and $K$ is an arbitrary function satisdying
inequality
\begin{equation}                                                  \label{anfjgh}
  1-K^2+H_\xi-H\ge0.
\end{equation}
The upper and lower signs in Eqs.~(\ref{ubbxjl}), (\ref{ujdfji}) must be chosen
simultaneously.
\end{theorem}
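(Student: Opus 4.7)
The idea is to treat the system (\ref{akkgjy})--(\ref{absvdf}), after passing to the variable $\xi=\ln r$, as three relations among the four unknowns $K$, $L$, $M$, $H$ and to eliminate $M$ and $L$ algebraically, reducing everything to a single ODE for $H$ alone. The surprising feature that will eventually emerge is that, after these eliminations, $K$ drops out entirely and remains as a free function.

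First I add equations (\ref{akkgjy}) and (\ref{abcndh}). The terms $\pm K_\xi$, $\pm ML$, and $\pm KH$ cancel pairwise, leaving
\[
M^2+K^2-1=H_\xi-H,
\]
which is formula (\ref{ubbxjl}) up to a sign; the inequality (\ref{anfjgh}) is the obvious reality requirement for $M$. Next I use (\ref{akkgjy}) to solve for $ML$ algebraically and substitute the expression for $M^2$ just obtained:
\[
ML=KH-K_\xi-M^2=-\bigl[K_\xi-K^2+1+H_\xi-(K+1)H\bigr].
\]
Dividing by $M$ yields (\ref{ujdfji}); since this expression for $ML$ does not involve $M$ on the right, the signs of $L$ and $M$ are necessarily tied, as stated in the theorem.

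The remaining content of (\ref{absvdf}) is extracted by multiplying it by $M$, so that $MM_\xi=HM^2+K(ML)+KM^2$, and substituting the closed forms for $M^2$ and $ML$. The calculation is routine but requires careful bookkeeping; the essential point is that every monomial containing $K$ except $-KK_\xi$ cancels, leaving
\[
MM_\xi=H+HH_\xi-H^2-KK_\xi.
\]
On the other hand, differentiating $M^2=1-K^2+H_\xi-H$ directly gives $MM_\xi=-KK_\xi+\tfrac12(H_{\xi\xi}-H_\xi)$. Equating the two expressions, the $-KK_\xi$ terms cancel and one arrives at the second-order ODE $H_{\xi\xi}-H_\xi=2H+2HH_\xi-2H^2$ for $H$ alone. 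Setting $P:=H_\xi+H-H^2$, a direct computation gives $P_\xi=2P$, so that $P=C\ex^{2\xi}$ for some constant $C\in\MR$, which is exactly the Riccati equation (\ref{ahsgtr}).

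The main technical obstacle, and also the most interesting point, is precisely this cancellation of all $K$-dependent terms in the reduction of (\ref{absvdf}): it is what turns $K$ into a free functional parameter of the general solution. For the converse direction, given any solution $H$ of (\ref{ahsgtr}), any $K$ satisfying (\ref{anfjgh}), and a consistent sign choice, one reads the above identities in reverse: formulas (\ref{ubbxjl}) and (\ref{ujdfji}) are tailored so that (\ref{akkgjy}) and the sum of (\ref{akkgjy}) and (\ref{abcndh}) both hold, and the Riccati equation (\ref{ahsgtr}) supplies exactly the integrability condition needed to recover (\ref{absvdf}).
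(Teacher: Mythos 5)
Your proof is correct and follows essentially the same route as the paper: add the first two equations to obtain $M^2=1-K^2+H_\xi-H$, solve algebraically for $L$, and reduce the third equation to $H_{\xi\xi}-H_\xi-2HH_\xi-2H+2H^2=0$, which integrates to the Riccati equation. Your only (cosmetic) variations are extracting $L$ from (\ref{akkgjy}) rather than (\ref{abcndh}) and multiplying (\ref{absvdf}) by $M$ to avoid differentiating the square root — a tidy way to organize the same cancellation of the $K$-terms.
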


\begin{proof}
Add equations~(\ref{akkgjy}) and~(\ref{abcndh}):
\begin{equation*}
  M^2+K^2-1=rH'-H.
\end{equation*}
It implies Eq.~(\ref{ubbxjl}). Substitution of this solution into
Eq.~(\ref{abcndh}) yields
\begin{equation*}
  -K_\xi+K^2-1\mp L\sqrt{1-K^2+H_\xi-H^2}=H_\xi-H-KH.
\end{equation*}
It gives Eq.~(\ref{ujdfji}).

After substitution of $M$ (\ref{ubbxjl}) and $L$ (\ref{ujdfji})
into Eq.~(\ref{absvdf}) and changing of coordinate (\ref{ajhsjk}) all terms with
$K$ cancel, and we get the equation for $H(\xi)$:
\begin{equation}                                                  \label{anmdnf}
  H_{\xi\xi}-H_\xi-2H_\xi H-2H+2H^2=0.
\end{equation}
It is rewritten as
\begin{equation*}
  (H_\xi+H-H^2)_\xi-2(H_\xi+H-H^2)=0.
\end{equation*}
This equation can be easily integrated yielding the Riccati Eq.~(\ref{ahsgtr})
with constant of integration $C$.

We are looking for real valued solutions, therefore Eq.~(\ref{anfjgh}) must
hold.
\end{proof}

Thus we reduced the whole problem to solution of the Riccati equation
(\ref{ahsgtr}), functions $M$ and $L$ are expressed through $H$ and $K$, the
function $K$ being arbitrary.

Now we consider two special cases. Let arbitrary function $K$ satisfy equation
\begin{equation}                                                  \label{ahfjry}
  K^2=1+rH'-H.
\end{equation}
It implies $M=0$. Then Eq.~(\ref{absvdf}) yields $KL=0$, and we have two
subcases: $K=0$ and $L=0$.

{\bf Subcase $M=0$, $K=0$.} Then Eq.~(\ref{akkgjy}) is satisfied, and
Eq.~(\ref{abcndh}) yields
\begin{equation*}
  rH'-H+1=0.
\end{equation*}
Its general solution is
\begin{equation}                                                  \label{anvbfg}
  H=1+C_1r,\qquad C_1=\const.
\end{equation}
Thus we get
\begin{prop}
If Eq.~(\ref{ahfjry}) holds and $K=0$, then a general solution of the
Bogomol'nyi equations  (\ref{akkgjy})--(\ref{absvdf}) is
\begin{equation}                                                  \label{anvbfp}
  M=0,\qquad H=1+C_1r,
\end{equation}
the function $L$ being arbitrary.
\end{prop}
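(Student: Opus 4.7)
The plan is to substitute both hypotheses --- $K\equiv 0$ and the constraint (\ref{ahfjry}) --- directly into the system (\ref{akkgjy})--(\ref{absvdf}), reducing it to a single linear first-order ODE for $H$ in which $L$ does not appear.

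First I would set $K=0$ in (\ref{ahfjry}) to obtain
\begin{equation*}
  rH'-H+1=0.
\end{equation*}
This same relation is what forces $M\equiv 0$: by Theorem~\ref{tghqhh} we have $M=\pm\sqrt{1-K^2+H_\xi-H}$, and with $K=0$ and $H_\xi=rH'$ the radicand equals $1+rH'-H$, which vanishes by the displayed identity. (Equivalently, the text's remark that (\ref{ahfjry}) implies $M=0$ is exactly this observation.)

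Next I would substitute $K=0$ and $M=0$ into each of (\ref{akkgjy})--(\ref{absvdf}). Equations (\ref{akkgjy}) and (\ref{absvdf}) degenerate to the trivial identity $0=0$ irrespective of $L$, which is precisely why $L$ remains unconstrained in the final answer. Equation (\ref{abcndh}) collapses to $-1=rH'-H$, reproducing the relation extracted above --- a consistency check rather than a new condition.

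Finally I would integrate $rH'=H-1$. The homogeneous solutions are $C_1 r$ and $H=1$ is an obvious particular solution, so the general solution is $H=1+C_1r$ with $C_1\in\MR$, matching (\ref{anvbfp}). There is no substantive obstacle here: once $M=0$ is recognized, the Bogomol'nyi system collapses to an elementary linear ODE, and the arbitrariness of $L$ falls out from the vanishing of its coefficients in the two surviving equations. The only step that deserves a moment's attention is checking that (\ref{abcndh}), after imposing $K=M=0$, reduces to exactly the same relation as (\ref{ahfjry}), so that the three equations do not produce incompatible constraints on $H$.
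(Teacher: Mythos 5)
Your proposal is correct and follows essentially the same route as the paper: both derive $M=0$ from the vanishing of the radicand $1-K^2+rH'-H$ under hypothesis (\ref{ahfjry}), observe that (\ref{akkgjy}) and (\ref{absvdf}) become trivial so $L$ is unconstrained, and integrate the surviving linear equation $rH'-H+1=0$ to get $H=1+C_1r$. The only cosmetic difference is that you read the ODE off from (\ref{ahfjry}) with $K=0$ and treat (\ref{abcndh}) as a consistency check, whereas the paper extracts it from (\ref{abcndh}); these are logically interchangeable.
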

The gauge and scalar fields for this solution are
\begin{equation}                                                  \label{ebdvfd}
\begin{split}
  \vf^i=&\frac{x^i}r\left(\frac1r+C_1\right),
\\
  A_\mu{}^i=&-\ve_\mu{}^{ij}\frac{x_j}{r^2}+\frac{x_\mu x^i}{r^2}U(r),
\end{split}
\end{equation}
the function $U:=L/r$ being arbitrary. To satisfy the boundary conditions
(\ref{uvcxfs}) we must assume that $U(\infty)=0$ and $C_1\ne0$. This solution
seems to be new.

{\bf Subcase $M=0$, $L=0$.} Then the full system of the Bogomol'nyi equations
reduces to
\begin{equation}                                                  \label{ejgkky}
\begin{split}
  rK'=&KH,
\\
  rH'=&K^2-1+H.
\end{split}
\end{equation}
This subcase corresponds to the 't Hooft--Polyakov ansatz.
If we solve the second Eq.~(\ref{ejgkky}) for $K$ and substitute the solution
into the first equation, then we obtain Eq.~(\ref{anmdnf}). Thus the original
't Hooft--Polyakov monopoles correspond to the special case of general solution
given by Theorem~\ref{tghqhh} when arbitrary function is given by
Eq.~(\ref{ahfjry}).

{\bf A general case.} The Riccati Eq.~(\ref{ahsgtr}) in old
coordinate $r$ is
\begin{equation*}
  rH'+H-H^2=Cr^2.
\end{equation*}
Substitution
\begin{equation*}
  H(r):=\frac r{Z(r)}+1
\end{equation*}
results in the special Riccati equation (see, e.g.\ \cite{vKamke59A}, Part III,
Chapter I, Eq.~1.99)
\begin{equation}                                                  \label{anncbh}
  Z'+CZ^2=-1.
\end{equation}
Its solution going through the point $Z(0)=Z_0$ is
\begin{equation}                                                  \label{abdjuy}
  Z(r)=\begin{cases}\displaystyle
  \frac{Z_0\sqrt{-C}-\tanh(\sqrt{-C}r)}{\sqrt{-C}+CZ_0\tanh(\sqrt{-C}r)}, &
  \qquad C<0,
  \\[8pt]
  Z_0-r, &\qquad C=0,
  \\[4pt] \displaystyle
  \frac{Z_0\sqrt C-\tan(\sqrt{C}r)}{\sqrt{C}+CZ_0\tan(\sqrt{C}r)}, & \qquad C>0.
\end{cases}
\end{equation}
The constant of integration $C\ne0$ can be absorbed by rescaling the field
and radius
\begin{equation*}
  Z\mapsto\sqrt{|C|}\,Z,\qquad r\mapsto\sqrt{|C|}\,r.
\end{equation*}
Then the solution is
\begin{equation}                                                  \label{akglgi}
  Z(r)=
  \begin{cases}
    \displaystyle\frac{Z_0-\tanh r}{1-Z_0\tanh r}, &\qquad C<0,
  \\[8pt]
  Z_0-r, &\qquad C=0,
    \\[8pt]
    \displaystyle\frac{Z_0-\tan r}{1+Z_0\tan r}, &\qquad C>0.
\end{cases}
\end{equation}
For $C=0$ the solution remains the same (\ref{abdjuy}). Thus the constant $C$ in
general solution (\ref{ahsgtr}) takes, in fact, only three different values:
$C=-1,0,1$.

The scalar fields for solution (\ref{akglgi}) are
\begin{equation}                                                  \label{akglgj}
  \vf^i(r)=
  \begin{cases}
    \displaystyle\frac{x^i}r\left(\displaystyle\frac{1-Z_0\tanh r}
    {Z_0-\tanh r}+\frac1r\right),&\qquad C<0,
  \\[8pt]
  \displaystyle\frac{x^i}r\left(\displaystyle\frac1{Z_0-r}+\frac1r\right),
  &\qquad C=0,
    \\[8pt]
   \displaystyle\frac{x^i}r\left(\displaystyle\frac{1+Z_0\tan r}
   {Z_0-\tan r}+\frac1r\right), &\qquad C>0.
\end{cases}
\end{equation}
At infinity the limit is
\begin{equation*}
  \vf^i(\infty)=
  \begin{cases}
    -\displaystyle\frac{x^i}r,&\qquad C<0,
  \\
  0, &\qquad C=0,
    \\
   ?, &\qquad C>0.
\end{cases}
\end{equation*}
Thus only solutions with $C<0$ satisfy boundary condition (\ref{uvcxfs}). In the
case $C>0$ the scalar field has periodic singularities and does not have the
limit as $r\to\infty$. Therefore its physical meaning is obscure.

A general solution for the gauge field is more complicated and depends on
arbitrary function $K(r)$. We return to its analysis in future.

Up to now only a few spherically symmetric solutions of the Bogomol'nyi
equations are known.

If $C<0$, $Z_0=0$, and Eq.~(\ref{ahfjry}) holds, \emph{}then
\begin{equation}                                                  \label{abcndf}
  Z=-\tanh r,\qquad H=1-\frac r{\tanh r},\qquad K=\pm\frac r{\sinh r}.
\end{equation}
This is precisely the famous Bogomol'nyi--Prasad--Sommerfield solution
\cite{PraSom75,Bogomo76}. For $Z_0\ne0$ and arbitrary function $K(r)$ we have
infinitely many new solutions which differ, for example, by the tensorial
structure of the gauge field (\ref{unbcht}).

If $C=0$ and Eq.~(\ref{ahfjry}) holds, then
\begin{equation}                                                  \label{abcvdf}
  Z=Z_0-r,\qquad H=\frac{Z_0}{Z_0-r},\qquad K=\pm\frac r{Z_0-r}.
\end{equation}
This is the solution found in \cite{Single95}.
\section{Conclusion}
We considered the most general spherically symmetric ansatz for the gauge and
scalar fields in the $\MS\MU(2)$ gauge model. A general analytic solution of the
Bogomol'nyi equations is found. It includes the Bogomol'nyi--Prasad--Sommerfield
and Singleton solutions as particular cases. A general solution describes also
infinitely many new solutions for different values of constant $Z_0$ and
arbitrary function $K(r)$. Thus we obtained all spherically symmetric
't Hooft--Polyakov monopoles minimizing the energy in the $Q=1$ sector. All
smooth solutions have the same minimal energy.

Scalar functions for $C>0$ are not smooth. They have periodic singularities when
$r$ ranges from 0 to $\infty$ and do not have the limit as $r\to\infty$.
Therefore physical meaning of these solutions of the Bogomol'nyi equations is
obscure. Anyway, we have proved that there are no other spherically symmetric
solutions.

The Lie algebra $\Gs\Gu(2)$ is isomorphic to $\Gs\Go(3)$. Therefore the
't Hooft--Polyakov monopole solutions may be given physical interpretation in
solid state physics assuming that the rotational group $\MS\MO(3)$ acts in the
tangent space. They describe media with point disclinations
\cite{Katana20A,Katana21D,Katana21E}. Probably, the obtained spherically
symmetric solutions may be observed in solids.

This work is supported by the Russian Science Foundation under grant
19-11-00320.

\end{document}